\title{Fast quantum algorithms for approximating some irreducible
  representations of groups}
\author{Stephen P. Jordan\footnote{Parts of this work were completed
    at MIT's Center for Theoretical Physics and RIKEN's Digital
    Materials Laboratory.}\\
{\em \small Institute for Quantum Information, California Institute
 of Technology.} \small \texttt{sjordan@caltech.edu}}
\date{}
\newcommand{\bra}[1]{\left\langle{#1}\right\vert}
\newcommand{\ket}[1]{\left\vert{#1}\right\rangle}
\newcommand{\qw}[1][-1]{\ar @{-} [0,#1]}
\newcommand{\qwx}[1][-1]{\ar @{-} [#1,0]}
\newcommand{\gate}[1]{*{\xy *+<.6em>{#1};p\save+LU;+RU **\dir{-}\restore\save+RU;+RD **\dir{-}\restore\save+RD;+LD **\dir{-}\restore\POS+LD;+LU **\dir{-}\endxy} \qw}
\newcommand{\meter}{\gate{\xy *!<0em,1.1em>h\cir<1.1em>{ur_dr},!U-<0em,.4em>;p+<.5em,.9em> **h\dir{-} \POS <-.6em,.4em> *{},<.6em,-.4em> *{} \endxy}}
\newcommand{\control}{*!<0em,.025em>-=-{\bullet}}
\newcommand{\ctrl}[1]{\control \qwx[#1] \qw}
\newcommand{\lstick}[1]{*!R!<.5em,0em>=<0em>{#1}}
\newcommand{\Qcircuit}[1][0em]{\xymatrix @*[o] @*=<#1>}
\newcommand{\captionfonts}{\small}
\long\def\@makecaption#1#2{%
  \vskip\abovecaptionskip
  \sbox\@tempboxa{{\captionfonts #1: #2}}%
  \ifdim \wd\@tempboxa >\hsize
    {\captionfonts #1: #2\par}
  \else
    \hbox to\hsize{\hfil\box\@tempboxa\hfil}%
  \fi
  \vskip\belowcaptionskip}
\newenvironment{proof}{\noindent \textbf{Proof:}}{$\Box$}
\begin{document}
\bibliographystyle{plain}
\maketitle
\newcommand{\ud}{\mathrm{d}}
\newcommand{\braket}[2]{\langle #1|#2\rangle}
\newcommand{\Bra}[1]{\left<#1\right|}
\newcommand{\Ket}[1]{\left|#1\right>}
\newcommand{\Braket}[2]{\left< #1 \right| #2 \right>}
\renewcommand{\th}{^\mathrm{th}}
\newcommand{\tr}{\mathrm{Tr}}
\newcommand{\id}{\mathds{1}}

\newtheorem{theorem}{Theorem}
\newtheorem{proposition}{Proposition}
\newtheorem{hypothesis}{Hypothesis}

\begin{abstract}
We consider the quantum complexity of estimating matrix
elements of unitary irreducible representations of groups. For several
finite groups including the symmetric group, quantum Fourier
transforms yield efficient solutions to this problem. Furthermore,
quantum Schur transforms yield efficient solutions for certain irreducible
representations of the unitary group. Beyond this, we obtain
$\mathrm{poly}(n)$-time quantum algorithms for approximating matrix
elements from all the irreducible representations of the alternating
group $A_n$, and all the irreducible representations of polynomial
highest weight of $U(n)$, $SU(n)$, and $SO(n)$. These quantum
algorithms offer exponential speedup in worst case complexity over the
fastest known classical algorithms. On the other hand, we show that
average case instances are classically easy, and that the techniques
analyzed here do not offer a speedup over classical computation for
the estimation of group characters.
\end{abstract}

\section{Introduction}

Explicit representations of groups have many uses in physics,
chemistry, and mathematics. All representations of finite groups and
compact linear groups can be expressed as unitary matrices given an
appropriate choice of basis\cite{Artin9}. This makes them natural
candidates for implementation using quantum circuits. Here we show
that polynomial size quantum circuits can implement:
\begin{itemize}
\item The irreducible representations of any finite group which has an
  efficient quantum Fourier transform. This includes the symmetric
  group $S_n$.
\item The irreducible representations of the alternating group $A_n$.
\item The irreducible representations of polynomial highest weight of
  the unitary $U(n)$, special unitary $SU(n)$, and special
  orthogonal $SO(n)$ groups.
\end{itemize}
Using these quantum circuits one can find a polynomially precise
additive approximation to any matrix element of these representations
by repeating a simple measurement called the Hadamard test, as
described in section \ref{hadamard}.

More precisely, for the finite groups $S_n$ and $A_n$ we obtain
any matrix element of any irreducible representation to within $\pm
\epsilon$ in time that scales polynomially in $1/\epsilon$ and
$n$. For the Lie groups $U(n)$, $SU(n)$, and $SO(n)$ we obtain any 
matrix element of any irreducible representation of polynomial highest
weight to within $\pm \epsilon$ in time that scales polynomially in
$1/\epsilon$ and $n$. Because the representations considered are of
exponentially large dimension, one cannot efficiently find these
matrix elements by classically multiplying the matrices representing a
set of generators. Note that, many computer science applications use
multiplicative approximations. In this case, one computes an estimate
$\tilde{x}$ of a quantity $x$ with the requirement that $(1-\epsilon)x
\leq \tilde{x} \leq (1-\epsilon)x$. The approximations obtained in
this paper are all additive rather than multiplicative. For some
problems, the computational complexity of additive approximations can
differ greatly from that of mulitplicative
approximations\cite{Bordewich, Tutte}.

For exponentially large unitary matrices, the typical matrix element
is exponentially small. Thus for average instances, a
polynomially precise additive approximation provides almost no
information. However, it is common that the worst case instances
of a problem are hard whereas the average case instances are
trivial. In section \ref{complexity} I narrow down a class of
potentially hard instances for the problem of additively approximating
the matrix elements of the irreducible representations of the symmetric group to
polynomial precision. I also present a classical randomized algorithm
to estimate normalized characters of the symmetric group
$S_n$ to within $\pm \epsilon$ in $\mathrm{poly}(n,1/\epsilon)$
time. (The character is normalized by dividing by the dimension of the
representation, so that the character of the identity element of the
group is 1.) Thus, the techniques described here for evaluating matrix
elements of irreducible representations of groups on quantum computers
do not provide an obvious quantum speedup for the evaluation of the
characters of $S_n$.

Our results on the symmetric group relate closely to the quantum
complexity of evaluating Jones polynomials and other topological
invariants. Certain problems of approximating Jones and HOMFLY
polynomials can be reduced to the approximation of matrix
elements or characters of the Jones-Wenzl representation of the braid
group, which is a $q$-deformation of certain irreducible
representations of the symmetric group
\cite{Aharonov1,Yard,Shor_Jordan,Jordan_Wocjan}. Figure
\ref{complexities} compares the complexity of estimating matrix
elements and characters of the Jones-Wenzl representation of the braid
group to the complexity of the corresponding problems for the
symmetric group. Exact complexity characterizations (\emph{i.e.}
completeness results) are not known for all of these problems, and the
exact relationships between the complexity classes referenced in figure
\ref{complexities} are not rigorously known. Nevertheless, the results
seem to suggest that in general the matrix elements are harder to
approximate than the normalized characters, and that the Jones-Wenzl
representation of braid group is computationally harder than the
corresponding irreducible representations of the symmetric group.

\begin{figure}
\begin{center}
\begin{tabular}{c|c|c}
 & symmetric & braid \\
\hline
matrix elements & in BQP & BQP-complete \cite{Aharonov1, Yard} \\
\hline
normalized characters & in BPP & DQC1-complete \cite{Shor_Jordan, Jordan_Wocjan}
\end{tabular}
\end{center}
\caption{\label{complexities} The complexity results on the symmetric
  group refer arbitrary irreducible representations in Young's orthogonal
  form. The results on the braid group
  refer to the Jones-Wenzl representations, which give rise to Jones
  and HOMFLY polynomials. The complexity class DQC1 is the set of
  problems solvable in polynomial time on a one clean qubit
  computer. It is generally believed that one clean qubit computers
  are weaker than standard quantum computers but still capable of
  solving some problems outside of BPP.} 
\end{figure}

\section{Hadamard Test}
\label{hadamard}

The Hadamard test is a standard technique in quantum computation for
approximating matrix elements of unitary transformations. Suppose we
have an efficient quantum circuit implementing a unitary
transformation $U$, and an efficient procedure for preparing the state
$\ket{\psi}$. We can then approximate the real part of $\bra{\psi} U
\ket{\psi}$ using the following quantum circuit.
\[
\mbox{\Qcircuit @C=1em @R=.7em {
  \lstick{\frac{1}{\sqrt{2}}(\ket{0}+\ket{1})} & \qw & \ctrl{1} & \gate{H} & \meter \\
  \lstick{\ket{\psi}} & {/} \qw  & \gate{U} & {/} \qw & \qw & }}
\]
The probability of measuring $\ket{0}$ is
\[
p_0 = \frac{1+\mathrm{Re}(\bra{\psi}U\ket{\psi})}{2}.
\]
Thus, one can obtain the real part of
$\bra{\psi}U\ket{\psi}$ to precision $\epsilon$ by
making $O(1/\epsilon^2)$ measurements and counting what fraction of
the measurement outcomes are $\ket{0}$. Similarly, if the control bit
is instead initialized to $\frac{1}{\sqrt{2}} (\ket{0} - i \ket{1})$,
one can estimate the imaginary part of $\bra{\psi}U\ket{\psi}$. Thus
the problem of estimating matrix elements of unitary representations
of groups reduces to the problem of implementing these representations
with efficient quantum circuits.

\section{Fourier Transforms}
\label{fourier}

Let $G$ be a finite group and let $\hat{G}$ be the set of all
irreducible representations of $G$. We choose a basis for the
representations such that for any $\rho \in \hat{G}$ and $g \in G$,
$g$ is represented by a $d_\rho \times d_\rho$ unitary matrix with
entries $\rho_{i,j}(g)$. The quantum Fourier transform over $G$ is the
following unitary operator\cite{Nielsen_Chuang}
\[
U_{\mathrm{FT}} = \sum_{g \in G} \sum_{\rho \in \hat{G}} 
\sum_{i,j = 1}^{d_\rho} \sqrt{\frac{d_\rho}{|G|}}
\rho_{i,j}(g) \ket{\rho,i,j}\bra{g}.
\]
Here $\ket{g}$ is a computational basis state (bitstring) indexing the element $g$
of $G$. Similarly, $\ket{\rho,i,j}$ is three bitstrings, one indexing
the element $\rho \in \hat{G}$, and two writing out the numbers $i$
and $j$ in binary. The standard discrete Fourier transform is the
special case where $G$ is a cyclic group.

The regular representation of any $g \in G$ is
\[
U_g = \sum_{h \in G} \ket{gh}\bra{h}.
\]
A short calculation shows
\[
U_{\mathrm{FT}} U_g U_{\mathrm{FT}}^{-1} = \sum_{\rho \in \hat{G}}
\sum_{i,j = 1}^{d_\rho} \sum_{i',j'=1}^{d_\rho} \delta_{j,j'}
\rho_{i,i'}(g^{-1}) \ket{\rho,i,j}\bra{\rho,i',j'}.
\]
In other words, by conjugating the regular representation of $g$ with
the quantum Fourier transform, one recovers the direct sum of all
irreducible representations of $g^{-1}$.

Given an efficient quantum circuit implementing $U_{\mathrm{FT}}$ one
can thus efficiently estimate any matrix element of any irreducible
representation of $G$ using the Hadamard test. Quantum circuits
implementing the Fourier transform in $\mathrm{polylog}(|G|)$ time are
known for the symmetric group\cite{Beals} and several other
groups\cite{generalft}. The matrix elements of the representations
depend on a choice of basis. The bases used in quantum Fourier
transforms are subgroup adapted (see \cite{generalft}). In particular, the
symmetric group Fourier transform described in \cite{Beals} uses the
Young-Yamanouchi basis, also known as Young's orthogonal form.

In section \ref{althyp} we describe a more direct quantum circuit
implementation of the irreducible representations of the symmetric
group, which generalizes to yield efficient implementations for the
alternating group.

\section{Schur Transform}
\label{Schurxform}

Let $\mathcal{H}$ be the Hilbert space of $n$ $d$-dimensional qudits.
\[
\mathcal{H} = (\mathbb{C}^d)^{\otimes n}.
\]
We can act on this Hilbert space by choosing an element $u \in U(d)$
and applying it to each qudit.
\[
\ket{\psi} \to u^{\otimes n} \ket{\psi}
\]
We can also act on this Hilbert space by choosing an
element $\pi \in S_n$ and correspondingly permuting the $n$
qudits. 
\[
\ket{\psi} \to M_\pi \ket{\psi}
\]
$u^{\otimes n}$ and $M_\pi$ are reducible unitary $nd$-dimensional
representations of $U(d)$ and $S_n$, respectively. These two actions
on $\mathcal{H}$ commute. 

The irreducible representations of $S_n$ are in bijective correspondence
with the partitions of $n$. Any partition of $n$ into $d$ parts
indexes a unique irreducible representation of $U(d)$. $U(d)$ has
infinitely many irreducible representations, so these partitions only
index a special subset of them. As discussed in \cite{Schur}, there
exists a unitary change of basis $U_{\mathrm{Schur}}$ such that
\[
U_{\mathrm{Schur}} M_\pi u^{\otimes n} U_{\mathrm{Schur}}^{-1} =
\bigoplus_\lambda \rho_\lambda(\pi) \otimes \nu_\lambda(u),
\]
where $\lambda$ ranges over all partitions of $n$ into $d$
parts.

As shown in \cite{Schur}, $U_{\mathrm{Schur}}$ can be implemented by a
$\mathrm{poly}(n,d)$ size quantum circuit. Thus, using the Hadamard
test, one can efficiently obtain matrix elements of
these representations of the symmetric and unitary groups.

\section{Complexity of Symmetric Group Representations}
\label{complexity}

As described in section \ref{fourier}, quantum computers can solve the
following problem with probability $1-\delta$ in
$\mathrm{poly}(n,1/\epsilon,\log(1/\delta))$ time. Note that standard
Young tableaux index the Young-Yamanouchi basis vectors, as discussed
in section \ref{Young}.\\ \\ 
\noindent
\begin{minipage}[c]{\textwidth}
\textbf{Problem 1:} Approximate a matrix element in the Young-Yamanouchi
basis of an irreducible representation for the symmetric group $S_n$. \\
\textbf{Input:} A Young diagram specifying the irreducible
representation, a permutation from $S_n$, a pair of standard Young
tableaux indicating the desired matrix element, and a polynomially small
parameter $\epsilon$.\\
\textbf{Output:} The specified matrix element to within $\pm
\epsilon$. \\
\end{minipage}
It appears that no polynomial time classical algorithm for this
problem is known. Due mainly to applications in quantum chemistry,
many exponential time classical algorithms for the exact computation
of entire matrices from representations of the symmetric group have
been developed\cite{Hamermesh, Boerner, Wu_Zhang1, Wu_Zhang2, Egecioglu,
  Clifton, Rettrup, Pauncz}. There appears to be no literature on the
computation or approximation of individual matrix elements of
representations of $S_n$. 

On the other hand, the precision of approximation achieved by the
quantum algorithm is trivial for average instances. We can see this as
follows. Let $\lambda$ be a Young diagram of $n$ boxes, let
$\rho_\lambda$ be the corresponding irreducible representation of
$S_n$, and let $d_\lambda$ be the dimension of $\rho_\lambda$. For any
$\pi \in S_n$, the root mean square of the matrix elements of
$\rho_\lambda(\pi)$ is
\[
\mathrm{RMS}(\rho_\lambda(\pi)) = \sqrt{\frac{1}{d_\lambda^2}
  \sum_{a,b \in B} | \bra{a} \rho_\lambda(\pi) \ket{b} |^2}, 
\]
where $B$ is any complete orthonormal basis for the vector space on
which $\rho_\lambda$ acts. We see that
\[
\sum_{a \in B} | \bra{a} \rho_\lambda(\pi) \ket{b} |^2 = 1
\]
since, by the unitarity of $\rho_\lambda(\pi)$, this is just the norm
of $\ket{b}$. Thus, 
\begin{equation}
\label{small_rms}
\mathrm{RMS}(\rho_\lambda(\pi)) = \sqrt{\frac{1}{d_\lambda^2}
  \sum_{b \in B} 1} = \frac{1}{\sqrt{d_\lambda}}.
\end{equation}
The interesting instances of problem 1 are those in which $d_\lambda$
is exponentially large. In these instances, the typical
matrix element is exponentially small, by equation
\ref{small_rms}. Running the quantum algorithm yields polynomial
precision, thus one could instead simply guess zero every time, with
similar results.

That the average case instances are trivial does not mean that the
algorithm is trivial. Hard problems that are trivial on average are a
common occurrence. The most relevant example of this is the problem
of estimating a knot invariant called the Jones polynomial. A certain
problem of estimating the Jones polynomial of knots is
BQP-complete\cite{Freedman, Aharonov1, Aharonov2}. The Jones
polynomial algorithm is based on estimating matrix elements of certain
representations of the braid group to polynomial precision. On average
these matrix elements are exponentially small. Nevertheless, the
BQP-hardness of the Jones polynomial problem shows that the worst-case
instances are as hard as any problem in BQP.

By analogy to the results on Jones polynomials, one might ask ask
whether problem 1 is BQP-hard. The existing proofs of BQP-hardness of
Jones polynomial estimation rely on the fact that the relevant
representations of the braid group are dense in the corresponding unitary
group. Thus, one can construct a braid whose representation implements
approximately the same unitary as any given quantum
circuit. Furthermore, it turns out that the number of crossings needed
to achieve a good approximation scales only polynomially with the
number of quantum gates in the circuit. Unlike the braid group, the
symmetric group is finite. Thus, no representation of it can be dense
in a continuous group. Hence, if the problem of estimating matrix
elements of the symmetric group is BQP-hard, the proof will have to
proceed along very different lines than the BQP-hardness proof for
Jones polynomials.

Lacking a hardness proof, the next best thing is to identify a class
of instances in which the matrix elements are large enough to make the
approximation nontrivial. As shown below, we can do this using the
asymptotic character theory of the symmetric group. Note that we need
not worry about the matrix elements being too large, because even if
we know \emph{a priori} that a given matrix element has magnitude 1,
it could still be nontrivial to compute its sign.

Let $\pi$ be a permutation in $S_n$, and let $\lambda$ be a Young
diagram of $n$ boxes. The character
\[
\chi_\lambda(\pi) = \tr(\rho_\lambda(\pi))
\]
is clearly independent of the basis in which $\rho_\lambda$ is
expressed. Furthermore, the character of a group element depends only
on the conjugacy class of the group element, because for any
representation $\rho$,
\[
\tr(\rho(h g h^{-1})) = \tr(\rho(h) \rho(g) \rho(h)^{-1}) = \tr(\rho(g)).
\]

\begin{figure}
\begin{center}
\includegraphics[width=0.32\textwidth]{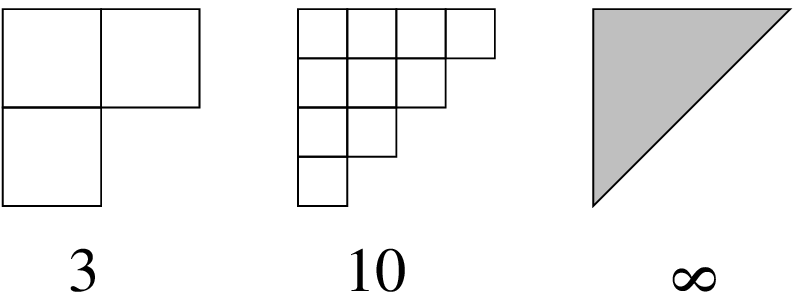}
\caption{\label{shapeconverge} Here is a sequence of Young diagrams,
 such that as the number of boxes increases, the Young diagram
 converges asymptotically to some fixed shape, in this case a
 triangle.}
\end{center}
\end{figure}

To understand the behavior of the characters of $S_n$ as $n$ becomes
large, consider a sequence of Young diagrams
$\lambda_1,\lambda_2,\lambda_3,\ldots$, where $\lambda_n$ has $n$
boxes. Suppose that the diagram $\lambda_n$, when scaled down by
a factor of $1/\sqrt{n}$, converges to a fixed shape $\omega$ in the
limit of large $n$, as illustrated in figure \ref{shapeconverge}. Let
$d_{\lambda_n}$ be the dimension of the irreducible representation
corresponding to Young diagram $\lambda_n$. Let $\pi$ be a permutation
in $S_k$. We can also consider $\pi$ to be an element of $S_n$ for any
$n > k$ which leaves the remaining $n-k$ objects fixed. As shown by
Biane\cite{Biane},
\begin{equation}
\label{Bianeformula}
\frac{\chi_{\lambda_n}(\pi)}{d_{\lambda_n}} =
C_\pi(\omega) n^{-|\pi|/2} + O(n^{-|\pi|/2-1}).
\end{equation}
Here $|\pi|$ denotes the minimum number of transpositions needed to
obtain $\pi$. Note that these are general transpositions, not
transpositions of neighbors. $C_{\pi}(\omega)$ is a constant that only
depends on $\pi \in S_k$ and the shape $\omega$. A precise definition
of what it means for the sequence to converge to a fixed shape is
given in \cite{Biane}, but for present purposes, the intuitive picture
of figure \ref{shapeconverge} should be sufficient.

$\chi_{\lambda_n}(\pi)/d_{\lambda_n}$ is the average of the matrix
elements on the diagonal of $\rho_{\lambda_n}(\pi)$. In the present
setting, where $\pi$ is fixed, $\chi_{\lambda_n}(\pi)/d_{\lambda_n}$ 
shrinks only polynomially with $n$. Thus polynomial precision is
sufficient to provide nontrivial estimates of these matrix
elements. Nevertheless, finding diagonal matrix elements of
$\rho_{\lambda_n}(\pi)$ for fixed $\pi$ and large $n$ is not
computationally hard. This is because, as discussed in section
\ref{althyp}, the Young-Yamanouchi basis is subgroup adapted. Thus, for
any $\pi$ which leaves all bit the first $k$ objects fixed,
$\rho_{\lambda_n}(\pi)$ is a direct sum of irreducible representations
of $\pi$ in $S_k$. Because $k$ is fixed, any irreducible
representations of $S_k$ has dimension $O(1)$ and can therefore be
computed in $O(1)$ time by multiplying the matrices representing
transpositions.

To produce a candidate class of hard instances of problem 1, we recall
that the character $\chi_{\lambda_n}(\pi)$ depends only on the
conjugacy class of $\pi$. Thus, we consider $\pi'$ conjugate to
$\pi$. Like $\pi \in S_n$, $\pi' \in S_n$ leaves at least $n-k$
objects fixed, and the representations $\chi_{\lambda_n}(\pi')$ have
diagonal matrix elements with polynomially small average
value. However, the objects left fixed by $\pi'$ need not be
$k+1,k+2,\ldots,n$. Indeed, $\pi'$ can be chosen so that the object 
$n$ is not left fixed, in which case $\rho_{\lambda_n}(\pi')$ cannot
be written as the direct sum of irreducible representations of $S_m$
for any $m < n$. 

There is an additional simple way in which an instance of problem 1
can fail to be hard. Let $r(\pi)$ be the minimal number of transpositions of
neighbors needed to construct the permutation $\pi$. If $r(\pi)$ is
constant or logarithmic, then the matrix elements of the irreducible
representations of $\pi$ can be computed classically in polynomial
time by direct recursive application of equation \ref{rule}. For a
class of hard instances of problem 1 I propose the following.\\ \\
\begin{minipage}[c]{\textwidth}
\begin{hypothesis}
Let $\pi$ be a permutation in $S_n$. We consider it to permute a
series of objects numbered $1,2,3,\ldots,n$. Let $s(\pi)$ be the
number of objects that $\pi$ does not leave fixed. Let $l(\pi)$ be the
largest numbered object that $\pi$ does not leave fixed. Let $r(\pi)$
be the minimum number of transpositions of neighbors needed to
construct $\pi$. Let $\lambda$ be a Young diagram of $n$ boxes, and
let $\rho_\lambda$ be the corresponding $d_\lambda$-dimensional
irreducible representation of $S_n$. I propose the problems of
estimating the diagonal matrix elements of $\rho_\lambda(\pi)$ such
that $s(\pi) = O(1)$, $l(\pi) = \Omega(n)$, and $r(\pi) = \Omega(n)$
as a possible class of instances of problem 1 not solvable classically
in polynomial time.
\end{hypothesis}
\vspace{6pt}
\end{minipage}
Although this hypothesis contains many restrictions on $\pi$, it is
clear that permutations satisfying all of these conditions exist. One
simple example is the permutation that transposes 1 with $n$.

\section{Characters of the Symmetric Group}
\label{characters}

Because characters do not depend on a choice of basis, the
computational complexity of estimating characters is especially
interesting. Hepler\cite{Hepler} showed that computing the characters
of the symmetric group exactly is \#P-hard. It is clear that an
algorithm for efficiently approximating matrix elements of a
representation can aid in approximating the corresponding
character. Specifically, the quantum algorithm for problem 1 yields an
efficient solution for the following problem.
\\ \\
\noindent
\begin{minipage}[c]{\textwidth}
\textbf{Problem 2:} Approximate a character for the symmetric group $S_n$.\\
\textbf{Input:} A Young diagram $\lambda$ specifying the irreducible
representation, a permutation $\pi$ from $S_n$, and a polynomially
small parameter $\epsilon$.\\
\textbf{Output:} Let $\chi^\lambda(\pi)$ be the character, and let
$d_\lambda$ be the dimension of the irreducible representation. The
output $\chi_{\mathrm{out}}$ must satisfy $|\chi_{\mathrm{out}} -
\chi^\lambda(\pi)/d_\lambda| \leq \epsilon$ with high probability.\\
\end{minipage}

However, as we show in this section, problem 2 is efficiently solvable
using only classical randomized computation. Thus the techniques used
for problem 1 do not offer immediate benefit for problem 2. Although
this is in some sense a negative result, it provides an interesting
illustration of the difference in complexity between estimating
individual matrix elements of representations and estimating the
characters.

We can reduce problem 2 to problem 1 by sampling uniformly at random
from the standard Young tableaux compatible with Young diagram
$\lambda$. For each Young tableau sampled we estimate the
corresponding diagonal matrix element of $\rho_\lambda(\pi)$, as
described in problem 1. By averaging the diagonal matrix elements for
polynomially many samples, we obtain the normalized character to
polynomial precision. The problem of sampling uniformly at random from
the standard Young tableaux of a given shape is nontrivial but it has
been solved. Greene, Nijenhuis, and Wilf proved in 1979 that their
``hook-walk'' algorithm produces the standard Young tableaux of any
given shape with uniform probability\cite{Greene}. Examination of
\cite{Greene} shows that the time needed by the hook-walk algorithm to
produce a random standard Young tableaux compatible with a Young
diagram of $n$ boxes is upper bounded by $O(n^2)$.

By averaging over diagonal matrix elements we lose some information
contained in the individual matrix elements. This observation gives
the intuition that it should often be harder to estimate individual
matrix elements of a representation than to estimate its trace. Jones
polynomials provide an example in which this intuition is
confirmed. As discussed in \cite{Shor_Jordan}, computing 
the Jones polynomial of the trace closure of a braid reduces to
computing the normalized character of a certain representation of the braid
group. The problem of additively approximating this normalized
character is only DQC1-complete. In contrast, the individual matrix
elements of this representation yield the Jones polynomial of the plat
closure of the braid and are BQP-complete to approximate. We see a
very similar phenomenon in the symmetric group; problem 2 is  is
solvable by a randomized polynomial-time classical algorithm, whereas
problem 1 is not, as far as we know.

To construct a classical algorithm for problem 2, first recall that the
character of a given group element depends only on the element's
conjugacy class. We can think of any $\pi \in S_n$ as acting on the set
$\{1,2,\ldots,n\}$. The sizes of the orbits of the elements of
$\{1,2,\ldots,n\}$ under repeated application of $\pi$ form a
partition of the integer $n$. For example, consider the permutation
$\pi \in S_5$ defined by
\[
\begin{array}{lllll} 
\pi(1) = 2 & \pi(2) = 3 & \pi(3) = 1 & \pi(4) = 5 & \pi(5) = 4.
\end{array}
\]
This divides the set $\{1,2,3,4,5\}$ into the orbits $\{1,2,3\}$ and
$\{4,5\}$. Thus it corresponds to the partition $(3,2)$ of the integer
$5$. Two permutations in $S_n$ are conjugate if and only if they
correspond to the same partition. Thus, we can introduce the following
notation. For any two partitions $\mu$ and $\lambda$ of $n$ define 
$\chi_\mu^\lambda$ to be the irreducible character of $S_n$ corresponding
to the Young diagram of $\lambda$ evaluated at the conjugacy class
corresponding to $\mu$.

To obtain an efficient classical solution to problem 2 we use the
following theorem due to Roichman\cite{Roichman2}. 
%
\begin{theorem}[From \cite{Roichman2}]
\label{Roichman_rule}
For any partitions $\mu = (\mu_1,\ldots,\mu_l)$ and $\lambda =
(\lambda_1,\ldots,\lambda_k)$ of $n$, the corresponding irreducible
character of $S_n$ is given by
\[
\chi_\mu^\lambda = \sum_\Lambda W_\mu(\Lambda)
\]
where the sum is over all standard Young tableaux $\Lambda$ of shape
$\lambda$ and
\[
W_\mu(\Lambda) = \prod_{\stackrel{1 \leq i \leq k}{i \notin B(\mu)}} f_\mu(i,\Lambda)
\]
where 
$
B(\mu) = \{\mu_1 + \ldots + \mu_r | 1 \leq r \leq l \}
$ 
and
\[
f_\mu(i,\Lambda) = \left\{ \begin{array}{rl}
-1 & \textrm{box $i+1$ of $\Lambda$ is in the southwest of box $i$} \\
0 & \textrm{$i+1$ is in the northeast of $i$, $i+2$ is in the
  southwest of $i+1$, and $i+1 \notin B(\mu)$} \\
1 & \textrm{otherwise}
\end{array} \right.
\]
\end{theorem}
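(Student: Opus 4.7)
The plan is to prove this by direct computation of the trace of $\rho_\lambda(\pi)$ in the Young-Yamanouchi basis, for a carefully chosen representative of the conjugacy class indexed by $\mu$. Since the character is a class function, I can take $\pi = \pi_\mu$ to be the permutation that cyclically permutes consecutive blocks: the first cycle acts on $\{1,\ldots,\mu_1\}$, the second on $\{\mu_1+1,\ldots,\mu_1+\mu_2\}$, and so on. With this choice, the set $B(\mu)$ is precisely the set of indices at which one cycle ends and the next begins, a structural feature that must reappear in the final formula.

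First I would decompose $\pi_\mu$ into a product of adjacent transpositions $s_i = (i,i+1)$, writing each cycle $(a+1,\ldots,a+\mu_r)$ as $s_{a+\mu_r-1}\cdots s_{a+1}$. Then $\chi^\lambda_\mu = \sum_\Lambda \langle \Lambda | \rho_\lambda(\pi_\mu) | \Lambda \rangle$, where $\Lambda$ ranges over standard Young tableaux of shape $\lambda$. Inserting resolutions of the identity between each adjacent transposition and applying Young's orthogonal form, in which $s_i$ acts on the span of $\{T, s_i \cdot T\}$ by a $2\times 2$ rotation determined by the axial distance between the boxes containing $i$ and $i+1$, turns the trace into a sum over closed walks in the space of tableaux.

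Next I would show that within each cycle, the contributing walks all begin and end at the same tableau $\Lambda$, and that the walk's weight factorizes position by position. The combinatorial core is to verify that the contribution at step $i$ reduces exactly to $f_\mu(i,\Lambda)$: the case $-1$ is forced when $i+1$ sits strictly southwest of $i$ in $\Lambda$, since then axial distance $-1$ makes $s_i$ act diagonally on $\Lambda$ with coefficient $-1$; the case $+1$ arises when an off-diagonal $\pm\sqrt{1-1/d^2}$ contribution is multiplied by its partner in the neighboring factor and the $1/d$ pieces cancel with the diagonal coefficient; and the case $0$ records the destructive interference in a configuration where $s_i$ and $s_{i+1}$ are both forced off-diagonal but the resulting two-step walk fails to return to $\Lambda$. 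The exclusion $i \notin B(\mu)$ simply reflects that the cycle decomposition contributes no $s_i$ at cycle boundaries.

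The main obstacle is the cancellation argument that collapses the sum over all diagonal/off-diagonal choices at the $n - l(\mu)$ internal positions of the cycles into the clean integer-valued product $\prod_{i \notin B(\mu)} f_\mu(i,\Lambda)$. Young's orthogonal matrix elements carry irrational coefficients $\sqrt{1-1/d^2}$ depending on axial distances, and it is not a priori evident that pairing up off-diagonal steps eliminates all such factors. One must track all walks in each cycle simultaneously, exploiting the braid relations $s_i s_{i+1} s_i = s_{i+1} s_i s_{i+1}$ to constrain allowable trajectories and to relate consecutive axial distances inside a single cycle. An alternative route would be to derive the identity from the Murnaghan-Nakayama rule by expanding border-strip tableaux into standard Young tableaux, but matching the sign rule of $f_\mu$ to the border-strip height statistic appears to be of comparable technical difficulty, so I would pursue the direct Young-orthogonal-form calculation first.
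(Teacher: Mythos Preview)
The paper does not prove this theorem at all: it is quoted verbatim from Roichman's paper \cite{Roichman2} and used as a black box to build the classical sampling algorithm for Problem~2. There is no ``paper's own proof'' to compare against; the author only needs the statement, not its justification.

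As for your plan itself, the overall strategy---fix the canonical cycle-block representative $\pi_\mu$, expand in adjacent transpositions, and compute the diagonal of $\rho_\lambda(\pi_\mu)$ in Young's orthogonal form---is the natural direct route and is close in spirit to how Roichman actually argues. But one concrete step is wrong as written: you claim that when $i+1$ lies southwest of $i$ the axial distance is $-1$ and $s_i$ acts diagonally as $-1$. That holds only when $i$ and $i+1$ occupy adjacent cells of the same column; ``southwest'' in the theorem is the general descent condition (box $i+1$ in a strictly lower row), where the axial distance can be any negative integer and $s_i$ genuinely mixes $\Lambda$ with $s_i\cdot\Lambda$. The $-1$ in $f_\mu$ therefore does \emph{not} come from a single diagonal factor but must emerge from the cancellation you flag as the main obstacle. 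You have correctly located the hard part---collapsing the sum over walks with irrational weights $\pm\sqrt{1-1/d^2}$ into a product of signs---but your case analysis for $f_\mu\in\{-1,0,1\}$ does not yet reflect what actually happens term by term, so the proposal as it stands would not go through without a substantially different local argument.
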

%
By using the hook walk algorithm we can sample uniformly at random
from the standard Young tableaux $\Lambda$ of shape $\lambda$. By
inspection of theorem \ref{Roichman_rule} we see that for each
$\Lambda$ sampled we can compute $W_\mu(\Lambda)$ classically in
$\mathrm{poly}(n)$ time. By averaging the values of $W_\mu(\Lambda)$
obtained during the course of the sampling we can thus obtain a
polynomially accurate additive approximation the the normalized
character, thereby solving problem 2.

Some readers may notice that theorem \ref{Roichman_rule} is similar in
form to the much older and better-known Murnaghan-Nakayama
rule. However, the Murnaghan-Nakayama rule is based on a sum over all
``rim-hook tableaux'' of shape $\lambda$ (see \cite{Roichman2}). It is
not obvious how to sample uniformly at random from the rim-hook
tableaux of a given shape. Thus, it is not obvious how to use the
Murnaghan-Nakayame rule to obtain a probabilistic classical algorithm
for problem 2.

\section{Lie Groups}

\subsection{Introduction}
\label{lieintro}

Because $U(n)$, $SU(n)$ and $SO(n)$ are compact linear groups, all of their
representations are unitary given the right choice of basis\cite{Artin9}.
In section \ref{Schurxform} we described how to efficiently approximate the
matrix elements from certain unitary irreducible representation of
$U(n)$. Here we present a more direct approach to this problem, which
can handle a larger set of representations of $U(n)$ and also extends
to some other compact Lie groups: $SU(n)$ and $SO(n)$.

$U(n)$, $SU(n)$, and $SO(n)$ are subgroups of $GL(n)$, the group of
all invertible $n \times n$ matrices. All of the irreducible
representations of $U(n)$ and $SU(n)$ can be obtained by restricting
the irreducible representations of $GL(n)$ to these subgroups. The 
best classical algorithms for computing irreducible representations of
$GL(n)$ and $U(n)$ appear to be those of \cite{Burgisser} and
\cite{Grabmeier}. These classical algorithms work by manipulating
matrices whose dimension equals the dimension of the
representation. Thus, they do not provide a polynomial time algorithm
for computing matrix elements from representations whose dimension is
exponentially large. The implementation of irreducible representations
of $SO(3)$ and $SU(2)$ by quantum circuits has been studied previously
by Zalka\cite{Zalka}.

\subsection{Gel'fand-Tsetlin representation of $U(n)$}
\label{Gelfand}

The irreducible representations of the Lie group $U(n)$
are most easily described in terms of the corresponding Lie algebra
$u(n)$. It is not necessary here delve into the theory of Lie
groups and Lie algebras, but those who are interested can see
\cite{Gilmore}. For now it suffices to say that $u(n)$ is the set of
all antihermitian $n \times n$ matrices, and for any $u \in U(n)$
there exists $h \in u(n)$ such that $u = e^h$. Given any representation
$a:u(n) \to u(m)$ one can construct a representation $A:U(n) \to U(m)$
as follows. For any $u \in U(n)$ find a corresponding $h(u) \in u(n)$
such that $e^h = u$,  and set $A(u) = e^{a(h(u))}$. If $a$ is an
antihermitian representation of $u(n)$ then $A$ is a unitary
representation of $U(n)$. Furthermore, it is clear that $A$ is
irreducible if and only if $a$ is irreducible.

It turns out that the irreducible representations of the algebra
$gl(n)$ of all $n \times n$ complex matrices remain irreducible when
restricted to the subalgebra $u(n)$. Furthermore, all of the
irreducible representations of $u(n)$ are obtained this way. Let
$E_{ij}$ be the $n \times n$ matrix with all matrix elements equal to
zero except for the matrix element in row $i$, column $j$, which is
equal to one. The set of all $n^2$ such matrices forms a basis over
$\mathbb{C}$ for $gl(n)$. Thus to describe a representation of $gl(n)$
it suffices to describe its action on each of the $E_{ij}$ matrices.

As described in chapter 18, volume 3 of \cite{Vilenkin}, explicit
matrix representations of $gl(n)$ were constructed by Gel'fand and
Tsetlin. (See also \cite{Gelfand_works}.) In their construction, one
thinks of the representation as 
acting on the formal span of a set of combinatorial objects called
Gel'fand patterns. The Gel'fand-Tsetlin representations of $E_{p,p-1}$
and $E_{p-1,p}$ are sparse and simple to compute for all $p \in
\{2,3,\ldots,n\}$. This property makes the Gel'fand-Tsetlin
representations particularly useful for quantum computation.

A Gel'fand pattern of width $n$ consists of $n$ rows of
integers\footnote{Some sources omit the top row, as it is left
  unchanged by the action of the representation.}. The $j\th$ row
(from bottom) has $j$ entries
$m_{1,j},m_{2,j},\ldots,m_{j,j}$. (Note that, in contrast to matrix
elements, the subscripts on the entries of Gel'fand patterns
conventionally indicate column first, then row.) These entries must
satisfy
\[
m_{j,n+1} \geq m_{j,n} \geq m_{j+1,n+1}.
\]
Gel'fand patterns are often written out diagrammatically. For example
the Gel'fand pattern of width 3 with rows
\[
\begin{array}{ccc}
m_{1,3} = 4 & m_{2,3} = 1 & m_{3,3} = 0 \\
m_{1,2} = 3 & m_{2,2} = 0 &             \\
m_{1,1} = 2 &             &             
\end{array}
\]
is represented by the diagram
\[
\left( \begin{array}{ccccc}
       4 &   & 1 &   & 0\\
         & 3 &   & 0 &  \\
         &   & 2 &   &
       \end{array} \right).
\]
This notation has the advantage that the entries that appear directly
to the upper left and upper right of a given entry form the upper and
lower bounds on the values that entry is allowed to take.
 
We call the top row of a Gel'fand pattern its weight\footnote{It is
  actually the \emph{highest} weight of the
  representation\cite{Vilenkin}, but for brevity I just call it the
  weight throughout this paper.}. To each weight of width $n$
corresponds one irreducible representation of $gl(n)$. This
irreducible representation acts on the formal span of all Gel'fand
patterns with that weight (of which there are always finitely
many). To describe the action of the representation of $gl(n)$ on
these patterns let
\begin{eqnarray}
\label{r1}
l_{p,q} & = & m_{p,q}-p \\
\label{r2}
a^j_{p-1} & = & \left| \frac{\prod_{i=1}^p (l_{i,p} - l_{j,p-1})
  \prod_{i=1}^{p-2} (l_{i,p-2}-l_{j,p-1}-1)} {\prod_{i \neq j}
  (l_{i,p-1} - l_{j,p-1}) (l_{i,p-1} - l_{j,p-1} -1)} \right|^{1/2} \\
\label{r3}
b^j_{p-1} & = & \left| \frac{\prod_{i=1}^p (l_{i,p} - l_{j,p-1} + 1)
  \prod_{i=1}^{p-2} (l_{i,p-2}-l_{j,p-1})} {\prod_{i \neq j}
  (l_{i,p-1} - l_{j,p-1}) (l_{i,p-1} - l_{j,p-1}+1) } \right|^{1/2}.
\end{eqnarray}
Let $M$ be a Gel'fand pattern and let $M_p^{+j}$ be the Gel'fand
pattern obtained from $M$ by replacing $m_{j,p}$ with
$m_{j,p}+1$. Similarly, let $M_p^{-j}$ be the Gel'fand pattern in
which $m_{j,p}$ has been replaced with $m_{j,p}-1$. The representation
$a_{\vec{m}}$ of $gl(n)$ corresponding to weight $\vec{m} \in
\mathbb{Z}^n$ is defined by the following rules\footnote{Warning:
  \cite{Vilenkin} contains a misprint, in which the sums in
  equations \ref{r4} and \ref{r5} are   taken up to $j=p$ instead of
  $j=p-1$.}, known as the Gel'fand-Tsetlin formulas.
\begin{eqnarray}
\label{r4}
a_{\vec{m}}(E_{p-1,p})M & = & \sum_{j=1}^{p-1} a^j_{p-1} M_{p-1}^{+j} \\
\label{r5}
a_{\vec{m}}(E_{p,p-1})M & = & \sum_{j=1}^{p-1} b^j_{p-1} M_{p-1}^{-j} \\
\label{r6}
a_{\vec{m}}(E_{p,p})M   & = & \left( \sum_{i=1}^p m_{i,p} -
\sum_{j=1}^{p-1} m_{j,p-1} \right) M
\end{eqnarray}

These formulas give implicitly a representation for all of $gl(n)$,
because any $E_{ij}$ can be obtained from operators of the form
$E_{p-1,p}$ and $E_{p,p-1}$ by using the commutation relation
$[E_{ik},E_{kl}] = E_{il}$. By restricting the representation
$a_{\vec{m}}$ to antihermitian subalgebra of $gl(n)$ and taking the
exponential, one obtains an irreducible group representation
$A_{\vec{m}}:U(n) \to U(d_{\vec{m}})$, where $d_{\vec{m}}$ is the
number of Gel'fand patterns with weight $\vec{m}$.

It should be noted that some references claim that
the set of allowed weights for representations of $GL(n)$ is
$\mathbb{N}^n$, whereas others identify, as we do, $\mathbb{Z}^n$ as
the allowed set of weights. The reason for this is that irreducible
representations of $GL(n)$ in which the entries $m_{n,1}, m_{n,2},
\ldots, m_{n,n}$ of the weight are all nonnegative are polynomial
invariants\cite{Konig37}. That is, for any $g \in GL(n)$ and any $\vec{m} \in
\mathrm{N}^n$, each matrix element of the representation
$\rho_{\vec{m}}(u)$ is a polynomial function of the $n^2$ matrix
elements of $u$. The representations involving negative weights are
called holomorphic representations, and many sources choose to neglect
them. In the case that $\vec{m} \in \mathbb{N}^n$, the Gel'fand
diagrams of width $n$ bijectively correspond to the semistandard Young
tableaux of $n$ rows (\emph{cf.} \cite{Difrancesco}, pg. 517).

\subsection{Quantum Algorithm for U(n)}
\label{ualg}

In this section we obtain an efficient quantum circuit implementation
of any irreducible representation of $U(n)$ in which the entries
$m_{1,n},\ldots,m_{n,n}$ of the highest weight are all at most
polynomially large. The dimension of such representations can grow
exponentially with $n$. Unlike the Schur transform, the method here
does not require $m_{1,n},\ldots,m_{n,n}$ to be nonnegative. We start
by finding a quantum circuit implementing the Gel'fand-Tsetlin
representation of an $n \times n$ unitary matrix of the form
\[
u_0 = \left[ \begin{array}{ccccc}
    u_{11} & u_{12} &   &        &   \\
    u_{21} & u_{22} &   &        &   \\
           &        & 1 &        &   \\
           &        &   & \ddots &   \\
           &        &   &        & 1
\end{array} \right],
\]
where all off-diagonal matrix elements not shown are zero. After that
we describe how to extend the construction to arbitrary 
$n \times n$ unitaries.

For a given weight $\vec{m} \in \mathbb{Z}^n$ we wish to implement the
corresponding representation $A_{\vec{m}}(u_0)$ with a quantum
circuit. To do this, we first find an $n \times n$ Hermitian matrix
$H_0$ such that $e^{i H_0} = u_0$. It is not hard to see that $H_0$
can be computed in polynomial time and takes the form
\[
H_0 = \left[ \begin{array}{cccccc}
             h_{11}   & h_{12} &        &       & \\
             h_{12}^* & h_{22} &        &       & \\
                     &       & 0      &       & \\
                     &       &        &\ddots & \\
                     &       &        &       & 0 
\end{array} \right].
\]
Thus,
\begin{equation}
\label{decomp1}
H_0 = h_{11} E_{11} + h_{12} E_{12} + h_{12}^* h_{21} + h_{22} E_{22}.
\end{equation}
Hence,
\begin{equation}
\label{decomp}
a_{\vec{m}}(H_0) = h_{11} a_{\vec{m}}(E_{11}) + h_{12}
a_{\vec{m}}(E_{12}) + h_{12}^* a_{\vec{m}}(E_{21}) + h_{22}
a_{\vec{m}}(E_{22}).
\end{equation}

To implement $A_{\vec{m}}(u_0)$ with a quantum circuit, we think of
$a_{\vec{m}}(H_0)$ as a Hamiltonian and simulate the corresponding
unitary time evolution $e^{-i a_{\vec{m}}(H_0)t}$ for $t=-1$. The
Hamiltonian $a_{\vec{m}}(H_0)$ has exponentially large dimension in
the cases of computational interest. However, examination of equation
\ref{decomp1} shows that $H_0$ is a linear combination of operators of
the form $E_{p,p-1}$ and $E_{p-1,p}$. Thus, by
the Gel'fand-Tsetlin rules of section \ref{Gelfand},
$a_{\vec{m}}(H_0)$ is sparse and that its individual matrix elements
are easy to compute. Under this circumstance, one can use the general
method for simulating sparse Hamiltonians proposed in
\cite{Aharonov_Tashma}.

Define row-sparse Hamiltonians to be those in which each row has at
most polynomially many nonzero entries. Further, define row-computable
Hamiltonians to be those such that there exists a polynomial time
algorithm which, given an index $i$, outputs a list of the nonzero
matrix elements in row $i$ and their locations. Clearly, all row
computable Hamiltonians are row-sparse. As shown in
\cite{Aharonov_Tashma}, the unitary $e^{-iHt}$ induced by any
row-computable Hamiltonian can be simulated in polynomial time
provided that the spectral norm $\|H\|$ and the time $t$ are at most
polynomially large. We have already noted that $a_{\vec{m}}(H_0)$ is
row-computable. $a_{\vec{m}}(H_0)$ is row sparse, and because we are
considering only polynomial highest weight, the entries of the
Gel'fand patterns, and hence the matrix elements of
$a_{\vec{m}}(H_0)$ are only polynomially large. Thus, by Gershgorin's
circle theorem $\|a_{\vec{m}}(H_0)\|$ is at most $\mathrm{poly}(n)$.

Having shown that a quantum circuit of $\mathrm{poly}(n)$ gates can
implement the Gel'fand-Tsetlin representation of an $n \times n$
unitary of the form $u_0$, the remaining task is to extend this to
arbitrary $n \times n$ unitaries. Examination of the preceding
construction shows that it works just the same for any unitary of the
form
\[
u_p = \id_p \oplus u \oplus \id_{n-p-2},
\]
where $\id_p$ denotes the $p \times p$ identity matrix and $u$ is a $2
\times 2$ unitary. Corresponding to $u_p$ is again an antihermitian
matrix of the form 
\[
H_p = 0_p \oplus h \oplus 0_{n-p-2}
\]
where $0_p$ is the $p \times p$ matrix of all zeros and $h$ is a $2
\times 2$ antihermitian matrix such that $e^{h} = u$. The only issue
to worry about is whether $\|a_{\vec{m}}(H_p)\|$ is at most
$\mathrm{poly(n)}$. By symmetry, one expects that
$\|a_{\vec{m}}(H_p)\|$ should be independent of $p$. However, this is
not obvious from examination of equations \ref{r1} through
\ref{r6}. Nevertheless, it is true, as shown in appendix
\ref{normindep}. Thus, the norm is no different than in the $p=0$
case, \emph{i.e.} $H_0$. 

By concatenating the quantum circuits implementing
$A_{\vec{m}}(u_1), A_{\vec{m}}(u_2),\ldots, A_{\vec{m}}(u_L)$, one can
implement $A_{\vec{m}}(u_1 u_2 \ldots u_L)$. We next show that any $n
\times n$ unitary can be obtained as a product of $\mathrm{poly}(n)$
matrices, each of the form $u_p$, thus showing that the quantum
algorithm is completely general and always runs in polynomial time. 

For any $2 \times 2$ matrix $M$, let $\mathcal{E}(M,i,j)$ be the $n
\times n$ matrix in which $M$ acts on the $i\th$ and $j\th$ basis
vectors. In other words, the $k,l$ matrix element of
$\mathcal{E}(M,i,j)$ is
\[
\mathcal{E}(M,i,j)_{kl} = \left\{ \begin{array}{ll} 
M_{11} & \textrm{if $k=i$ and $l=i$} \\
M_{12} & \textrm{if $k=i$ and $l=j$} \\
M_{21} & \textrm{if $k=j$ and $l=i$} \\
M_{22} & \textrm{if $k=j$ and $l=j$} \\
\delta_{kl} & \textrm{otherwise}
\end{array} \right. .
\]
Thus
\[
u_p = \mathcal{E} \left( \left[ \begin{array}{cccc}
u_{11} & u_{12} \\
u_{21} & u_{22} \end{array} \right], m+1,m+2 \right).
\]
Next note that,
\[
\begin{array}{l}
\mathcal{E} \left( \left[ \begin{array}{cccc}
u_{11} & u_{12} \\
u_{21} & u_{22} \end{array} \right],m+1,m+3 \right) = \vspace{3pt} \\
\mathcal{E} \left( \left[ \begin{array}{cccc}
0 & 1 \\
1 & 0 \end{array} \right],m+2,m+3 \right)
\mathcal{E} \left( \left[ \begin{array}{cccc}
u_{11} & u_{12} \\
u_{21} & u_{22} \end{array} \right], m+1,m+2 \right)
\mathcal{E} \left( \left[ \begin{array}{cccc}
0 & 1 \\
1 & 0 \end{array} \right],m+2,m+3 \right).
\end{array}
\]
Thus the matrix 
\[
\mathcal{E} \left( \left[ \begin{array}{cccc}
u_{11} & u_{12} \\
u_{21} & u_{22} \end{array} \right],m+1,m+3 \right)
\]
is obtained as a product of three matrices of the form
$u_p$. By repeating this conjugation process, one can obtain
\begin{equation}
\label{twolevel}
\mathcal{E} \left( \left[ \begin{array}{cccc}
u_{11} & u_{12} \\
u_{21} & u_{22} \end{array} \right],i,j \right)
\end{equation}
for arbitrary $i,j$ as a product of one matrix of the form
\[
\mathcal{E} \left( \left[ \begin{array}{cccc}
u_{11} & u_{12} \\
u_{21} & u_{22} \end{array} \right],p+1,p+2 \right)
\]
for some $p$ and at most $O(n)$ matrices of the form
\[
\mathcal{E} \left( \left[ \begin{array}{cccc}
0 & 1 \\
1 & 0 \end{array} \right],q+1,q+2 \right)
\]
with various $q$. A matrix of the form shown in equation
\ref{twolevel} is called a two-level unitary. As shown in section
4.5.1 of \cite{Nielsen_Chuang}, any $n \times n$ unitary is obtainable
as a product of $\mathrm{poly}(n)$ two-level unitaries. Thus we obtain
$A_{\vec{m}}(U)$ for any $n \times n$ unitary $U$ using
$\mathrm{poly}(n)$ quantum gates. One can then obtain any matrix
element of $A_{\vec{m}}(U)$ to precision $\pm \epsilon$ by repeating
the Hadamard test $O(1/\epsilon^2)$ times. 

\subsection{Special Orthogonal Group}

The special orthogonal group $SO(n)$ consists of all $n \times n$
real orthogonal matrices with determinant equal to one. The
irreducible representations of $SO(n)$ are closely related to those of
$U(n)$ and can also be expressed unitarily using a Gel'fand-Tsetlin
basis. As discussed in chapter 18, volume 3 of \cite{Vilenkin}, the
nature of the representations of $SO(n)$ depends on whether $n$ is even
or odd. Following \cite{Vilenkin} and \cite{Gelfand_works}, we
therefore introduce an integer $k$ and consider $SO(2k+1)$ and
$SO(2k)$ separately.

The irreducible representations of $SO(2k+1)$ are in bijective
correspondence with the set of allowed weight vectors $\vec{m}$
consisting of $k$ entries, each of which is an integer or
half-integer. Furthermore, the entries must satisfy
\[
m_{1,n} \geq m_{2,n} \geq \ldots \geq m_{k,n} \geq 0.
\]
The irreducible representations of $SO(2k)$ correspond to the weight
vectors $\vec{m}$ with $k-1$ entries, each of which must be an integer
or half integer, and which must satisfy
\[
m_{1,n} \geq m_{2,n} \geq \ldots \geq m_{k-1,n} \geq |m_{k,n}|.
\]

As in the case of $U(n)$, the set of allowed Gel'fand patterns is
determined by rules for how a row can compare to the one above it. For
$SO(n)$ these rules are slightly more complicated, and the rule for
the $j\th$ row depends on whether $j$ is odd or even. Specifically the
even rule for $j=2k$ is
\[
m_{1,2k+1} \geq m_{1,2k} \geq m_{2,2k+1} \geq m_{2,3k} \geq \ldots
\geq m_{k,2k+1} \geq m_{k,2k} \geq -m_{k,2k-1},
\]
and the odd rule for $j=2k-1$ is
\[
m_{1,2k} \geq m_{1,2k-1} \geq m_{2,2k} \geq m_{2,2k-1} \geq \ldots
\geq m_{k-1,2k} \geq m_{k-1,2k-1} \geq |m_{k,2k}|.
\]
The Lie algebra $so(n)$ corresponding to the Lie group $SO(n)$ is the
algebra of all antisymmetric $n \times n$ matrices. For any
$G \in SO(n)$ there exists a $g \in so(n)$ such that $e^{g} =
G$. The Lie algebra $so(n)$ is the space of all $n \times n$ real
traceless antisymmetric matrices. Thus it is spanned by operators of
the form
\[
I_{k,i} = E_{i,k}-E_{k,i} \quad 1 \leq i < k \leq n.
\]
We can fully specify a representation of $so(n)$ by specifying the
representations of the operators of the form $I_{q+1,q}$ because these
generate $so(n)$. That is, any element of $so(n)$ can be obtained as a
linear combination of commutators of such operators. The
Gel'fand-Tsetlin representation $b_{\vec{m}}$ of these operators
depends on whether $q$ is even or odd, and is given by the following
formulas.
\begin{eqnarray*}
A_{2p}^j(M) & = & \frac{1}{2} \left| \frac{\prod_{r=1}^{p-1} 
\left[(l_{r,2p-1}-\frac{1}{2})^2-(l_{j,2p}+\frac{1}{2})^2\right]
\prod_{r=1}^p \left[(l_{r,2p+1}-\frac{1}{2})^2 -
  (l_{j,2p}+\frac{1}{2})^2 \right]}
{\prod_{r \neq j} (l_{r,2p}^2 - l_{j,2p}^2)
  (l_{r,2p}^2-(l_{j,2p}+1)^2)} \right|^{1/2} \\
B_{2p+1}^j(M) & = & \left| \frac{\prod_{r=1}^p (l_{r,2p}^2-l_{j,2p+1}^2)
  \prod_{r=1}^{p+1} (l_{r,2p+2}^2-l_{j,2p+1}^2)}
{l_{j,2p+1}^2 (4l_{j,2p+1}^2-1) \prod_{r \neq j}
  (l_{r,2p+1}^2-l_{j,2p+1}^2) (l_{j,2p+1}^2 - (l_{r,2p+1}-1)^2)}
\right|^{1/2} \\
C_{2p}(M) & = & \frac{\prod_{r=1}^p l_{r,2p} \prod_{r=1}^{p+1}
  l_{r,2p+2}}{\prod_{r=1}^p l_{r,2p+1} (l_{r,2p+1}-1)} \\
b_{\vec{m}}(I_{2p+1,2p}) M & = & \sum_{j=1}^p A_{2p}^j(M) M_{2p}^{+j} -
\sum_{j=1}^p A_{2p}^j(M_2p^{-j}) M_{2p}^{-j} \\
b_{\vec{m}}(I_{2p+2,2p+1}) M & = & \sum_{j=1}^p B_{2p+1}^j(M)
M_{2p+1}^{+j} - \sum_{j=1}^p B_{2p+1}^j(M_{2p+1}^{-j}) M_{2p+1}^{-j} +
i C_{2p}(M) M
\end{eqnarray*}

By applying these rules to the set of allowed Gel'fand patterns
described above one obtains the irreducible representations of the
algebra $so(n)$. By exponentiating these, one then obtains the
irreducible representations of the group $SO(n)$. Thus the quantum
algorithm for approximating the matrix elements of the irreducible
representations of $SO(n)$ is analogous to that for $U(n)$.

\subsection{Special Unitary Group}
\label{sun}

The irreducible representations of $SU(n)$ can be easily constructed
from the irreducible representations of $U(n)$, using the following
facts taken from chapter 10 of \cite{BR}. The representations of
$U(n)$ can be partitioned into a set of equivalence classes of
projectively equivalent representations. Two representations of $U(n)$
with weights $\vec{l} =(l_1,l_2,\ldots,l_n)$ and
$\vec{m}=(m_1,m_2,\ldots,m_n)$ are projectively equivalent if 
and only if there exists some integer $s$ such that $m_i = l_i + s$
for all $1 \leq i \leq n$. Any irreducible representation of $U(n)$ remains 
irreducible when restricted to $SU(n)$. Furthermore, by choosing one
representative from each class of projectively equivalent
representations of $U(n)$ and restricting to $SU(n)$ one obtains a
complete set of inequivalent irreducible representations of
$SU(n)$. The Lie algebra $su(n)$ corresponding to the Lie group
$SU(n)$ is easily characterized; it is the space of all traceless $n
\times n$ antihermitian matrices. Thus the matrix elements of the
irreducible representations of $SU(n)$ are obtained by essentially the
same quantum algorithm given for $U(n)$ in section \ref{ualg}. 

\subsection{Characters of Lie Groups}

As always, an algorithm for approximating matrix elements immediately
gives us an algorithm for approximating the normalized
characters. However, the characters of $U(n)$, $SU(n)$, and $SO(n)$ are
classically computable in $\mathrm{poly}(n)$ time. As discussed in
\cite{Fulton_Harris}, the characters of any compact Lie group are given
by the Weyl character formula. In general this formula may involve
sums of exponentially many terms. However, in the special cases of
$U(n)$, $SU(n)$, and $SO(n)$ the formula reduces to simpler
forms\cite{Fulton_Harris}, given below.

Because characters depend only on conjugacy class, the character
$\chi_{\vec{m}}(u)$ depends only on the eigenvalues of $u$. For $u \in
U(n)$ let $\lambda_1,\ldots,\lambda_n$ denote the eigenvalues. Let
$\vec{m} = (m_1,m_2,\ldots,m_n) \in \mathbb{Z}^n$ be the weight of a
representation of $U(n)$. Let 
\begin{equation}
\label{l}
l_i = m_i + n - i
\end{equation}
for each $i \in \{1,2,\ldots,n\}$. The character of the representation
of weight $\vec{m}$ is
\[
\chi^{U(n)}_{\vec{m}}(u) = \frac{\det A}{\det B}
\]
where $A$ and $B$ are the following $n \times n$
matrices
\begin{eqnarray*}
A_{ij} & = & \lambda_i^{l_j} \\
B_{ij} & = & \lambda_i^{n-j}.
\end{eqnarray*}

This formula breaks down if $u$ has a degenerate spectrum. However,
the value of the character for degenerate $u$ can be obtained by
taking the limit as some eigenvalues converge to the same value. As
shown in \cite{Weyl7}, one can obtain the dimension $d_{\vec{m}}$ of the
representation corresponding to a given weight $\vec{m}$ by
calculating $\lim_{u \to \id} \chi_{\vec{m}}(u)$. Specifically, by
choosing $\lambda_j = e^{ij \epsilon}$ for each $1 \leq j \leq n$ and
taking the limit as $\epsilon \to 0$ one obtains
\[
d_{\vec{m}} = \frac{\prod_{i<j} (l_j - l_i)}
{\prod_{i<j} (j-i)},
\]
where $l_i$ is as defined in equation \ref{l}.

As discussed in section \ref{sun}, the irreducible representations of
$SU(n)$ are restrictions of irreducible representations of $U(n)$,
therefore the characters of $SU(n)$ are given by the same formula as
the characters of $U(n)$.

$SO(n)$ consists of real matrices. The characteristic polynomials of
these matrices have real coefficients, and thus their roots come in
complex conjugate pairs. Thus, the eigenvalues of an element $g \in
SO(2k+1)$ take the form
\[
\lambda_1, \lambda_2, \ldots, \lambda_k, 1, \lambda_1^*, \lambda_2^*,
\ldots, \lambda_k^*,
\]
and for $g \in SO(2k)$, the eigenvalues take the form
\[
\lambda_1, \lambda_2, \ldots, \lambda_k, \lambda_1^*, \lambda_2^*,
\ldots, \lambda_k^*.
\]
As discussed in \cite{Fulton_Harris}, the characters of the special
orthogonal group are given by
\[
\chi_{\vec{m}}^{SO(2k+1)}(g) = \frac{\det C}{\det D}
\]
and
\[
\chi_{\vec{m}}^{SO(2k)}(g) = \frac{\det E + \det F}
{\det G}
\]
where $C$ and $D$ are the following $k \times k$ matrices
\begin{eqnarray*}
C_{ij} & = & \lambda_j^{m_i+n-i+1/2} -
\lambda_j^{-(m_i+n-i+1/2)} \\
D_{ij} & = & \lambda_j^{n-i+1/2} - \lambda_j^{-(n-i+1/2)}
\end{eqnarray*}
and $E,F,G$ are the following $(k-1) \times (k-1)$ matrices
\begin{eqnarray*}
E_{ij} & = & \lambda_j^{l_i} + \lambda_j^{-l_i}\\
F_{ij} & = & \lambda_j^{l_i} - \lambda_j^{-l_i}\\
G_{ij} & = & \lambda_j^{n-i} + \lambda_j^{-(n-i)},
\end{eqnarray*}
where $l_i$ is as defined in equation \ref{l}. 

As with $U(n)$, the character of any element with a degenerate
spectrum can be obtained by taking an appropriate limit.

\subsection{Open Problems Regarding Lie groups}

The quantum circuits presented in the preceeding sections efficiently
implement the irreducible representations of $U(n)$, $SU(n)$, and
$SO(n)$ that have polynomial highest weight and polynomial $n$. It is
an interesting open problem to implement irreducible representations
with quantum circuits that scale polynomially in the number of digits
used to specify the highest weight. Alternatively, one could try to
implement an Schur transform to handle exponential highest weight,
which is also an open problem. It is even concievable that Schur-like
transforms could be efficiently implemented for exponential $n$. That
is, there could exist a quantum circuit of $\mathrm{polylog}(n)$
gates implementing a unitary transform $V$ such that for any $U \in
U(n)$, $V U V^{-1}$ is a direct sum of irreducible representations of
$U$. Of course, if $n$ is exponentially large, than we cannot have an
explicit description of $U$, rather the group element $U$ could itself
be defined by a quantum circuit.

A completely different open problem is presented by the symplectic
group. Having constructed quantum circuits for $SO(n)$ and $SU(n)$,
the symplectic group is the only ``classical'' Lie group remaining to
be analyzed. Thus it is natural to ask whether its irreducible
representations can be efficiently implemented by quantum
circuits. Two different groups can go by the name symplectic group
depending on the reference. Connected non-compact simple Lie groups
have no nontrivial finite-dimensional unitary representations (see
\cite{BR}, theorem 8.1.2). This applies to one of the groups that goes
by the name of symplectic. On the other hand, the irreducible
representations of the compact symplectic group seem promising for
implementation by quantum circuits. The main task seems to be finding
a basis for these representations that is subgroup adapted and makes
the representations unitary. A non-unitary subgroup-adapted basis is
given in \cite{Molev}.

\section{Alternating Group}
\label{althyp}

In section \ref{fourier}, we described a method to approximate matrix
elements of the irreducible representations of the symmetric group
using the symmetric group quantum Fourier transform. Here we take a
more direct approach to this problem, which extends to the
alternating group. To do this we must first
explicitly describe the Young-Yamanouchi representation of the
symmetric group.

\subsection{Young-Yamanouchi Representation}
\label{Young}

For a given Young diagram $\lambda$, let $\mathcal{V}_\lambda$ be the
vector space formally spanned by all standard Young tableaux
compatible with $\lambda$. For example, if
\[
\lambda = \begin{array}{l} \includegraphics[width=0.3in]{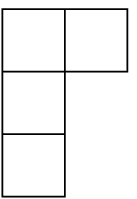}
\end{array}
\]
then $\mathcal{V}_\lambda$ is the 3-dimensional space consisting of
all formal linear combinations of
\[
\begin{array}{l} \includegraphics[width=1.4in]{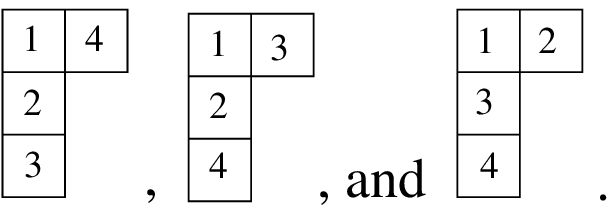}
\end{array}
\]
For any given Young diagram $\lambda$, the corresponding irreducible
representation in the Young-Yamanouchi basis is a homomorphism
$\rho_\lambda$ from $S_n$ to the group of orthogonal linear
transformations on $\mathcal{V}_\lambda$. It is not easy to directly
compute $\rho_\lambda(\pi)$ for an arbitrary permutation
$\pi$. However, it is much easier to compute the representation of a 
transposition of neighbors. That is, we imagine the elements of $S_n$
as permuting a set of objects $1,2,\ldots,n$, arranged on a line. A
neighbor transposition $\sigma_i$ swaps objects $i$ and $i+1$. It is
well known that the set $\{ \sigma_1,\sigma_2,\ldots,\sigma_{n-1} \}$
generates $S_n$.

The matrix elements for the Young-Yamanouchi representation of
transpositions of neighbors can be obtained using a single simple
rule: Let $\Lambda$ be any standard Young tableau compatible with Young diagram
$\lambda$ then
\begin{equation}
\label{rule}
\rho_\lambda(\sigma_i) \Lambda = \frac{1}{\tau_i^\Lambda} \Lambda +
\sqrt{1-\frac{1}{(\tau_i^\Lambda)^2}} \Lambda',
\end{equation}
where $\Lambda'$ is the Young tableau obtained from $\Lambda$ by
swapping boxes $i$ and $i+1$, and $\tau_i^\Lambda$ is the axial
distance from box $i+1$ to box $i$. That is, we
are allowed to hop vertically or horizontally to nearest
neighbors, and $\tau$ is the number of hops needed to get from box $i+1$
to box $i$, where going down or left counts as $+1$ hop and going
up or right counts as $-1$ hop. To illustrate the use of equation
\ref{rule}, some examples are given in figure \ref{examples}.

\begin{figure}
\begin{center}
\includegraphics[width=0.85\textwidth]{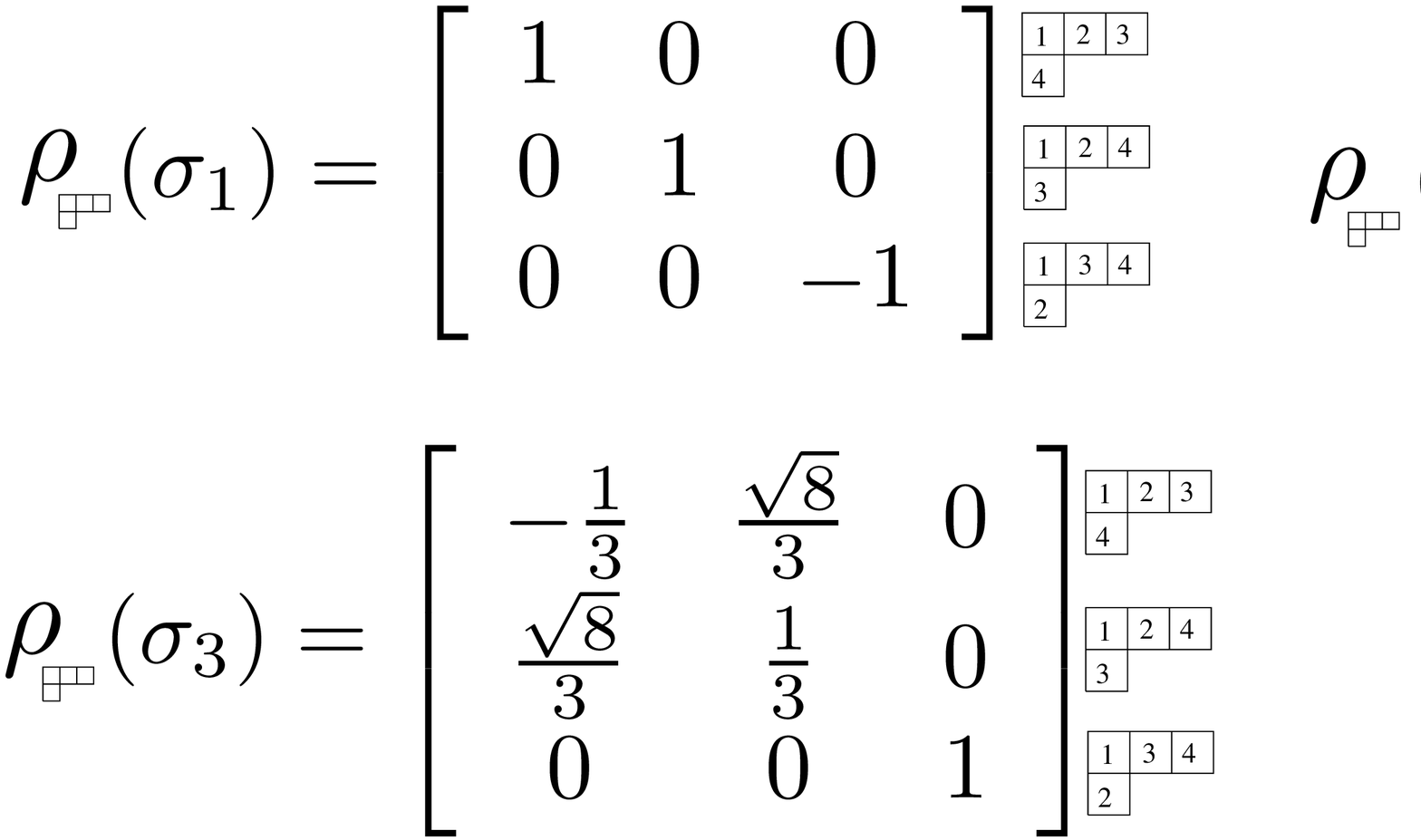}
\caption{\label{examples} 
The above matrices are irreducible representations in the
Young-Yamanouchi basis with Young diagram \protect
\includegraphics[width=0.18in]{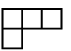}. Here $\sigma_i$ 
  is the permutation in $S_4$ that swaps $i$ with $i+1$.}
\end{center}
\end{figure}

In certain cases, starting with a standard Young tableau and swapping
boxes $i$ and $i+1$ does not yield a standard Young tableau, as
illustrated below.
\[
\includegraphics[width=2.7in]{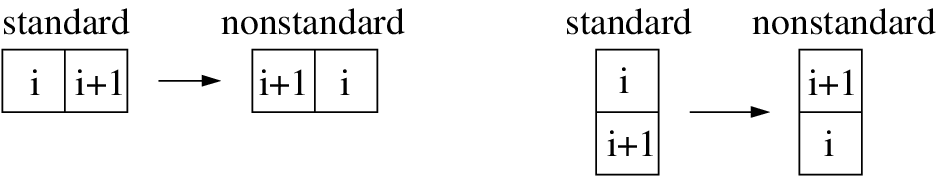}
\]
Some thought shows that all such cases are of one of the two types
shown above. In both of these types, the axial distance is $\pm
1$. By equation \ref{rule}, the coefficient on the invalid Young
tableau is $\sqrt{1-\frac{1}{(\pm 1)^2}} = 0$. Thus the representation
lies strictly within the space of standard Young tableaux.

\subsection{Direct Quantum Algorithm for $S_n$}
\label{algorithm}

We can directly implement the irreducible representations of $S_n$ by
first decomposing the given permutation into a product of
transposition of neighbors. The classical bubblesort algorithm
achieves this efficiently. For any permutation in $S_n$, it yields a
decomposition consisting of at most $O(n^2)$ transpositions. As seen
in the previous section, the Young-Yamanouchi representation of any
transposition is a direct sum of $2 \times 2$ and $1 \times 1$ blocks,
and the matrix elements of these blocks are easy to compute. As shown
in \cite{Aharonov_Tashma}, any unitary with these properties may be
implemented by a quantum circuit with polynomially many gates. By
concatenating at most $O(n^2)$ such quantum circuits we obtain the
representation of any permutation in $S_n$. The Hadamard test allows a
measurement to polynomial precision of the matrix elements of this
representation.

\subsection{Algorithm for Alternating Group}
\label{alt}

Any permutation $\pi$ corresponds to a permutation matrix with matrix
element $i,j$ given by $\delta_{\pi(i),j}$. The determinant of any
permutation matrix is $\pm 1$, and is known as the sign of the
permutation. The permutations of sign $+1$ are called even, and the
permutations of sign $-1$ are called odd. This is because a
transposition has determinant $-1$, and therefore any product of an
odd number of transpositions is odd and any product of an even number
of transpositions is even.

The even permutations in $S_n$ form a subgroup called the alternating
group $A_n$, which has size $n!/2$. $A_n$ is a simple group
(\emph{i.e.} it contains no normal subgroup) and it is the only normal
subgroup of $S_n$ other than $\{ \id \}$ and $S_n$. As one might
guess, the irreducible representations of the alternating group are
closely related to the irreducible representations of the symmetric
group. Consequently, as shown in this section, the quantum algorithm
of section \ref{algorithm} can be easily adapted to approximate any
matrix element of any irreducible representation of $A_n$ to within
$\pm \epsilon$ in $\mathrm{poly}(n,1/\epsilon)$ time.

Explicit orthogonal matrix representations of the alternating group
are worked out in \cite{Thrall} and recounted nicely in
\cite{Headley}. Any representation $\rho$ of $S_n$ is automatically
also a representation of $A_n$. However an irreducible representation
$\rho$ of $S_n$ may no longer be irreducible when restricted to
$A_n$. Each irreducible representation of $S_n$ either remains
irreducible when restricted to $A_n$ or decomposes into a direct sum
of two irreducible representations of $A_n$. All of the irreducible
representations of $A_n$ are obtained in this way. 

\begin{figure}
\begin{center}
\includegraphics[width=0.3\textwidth]{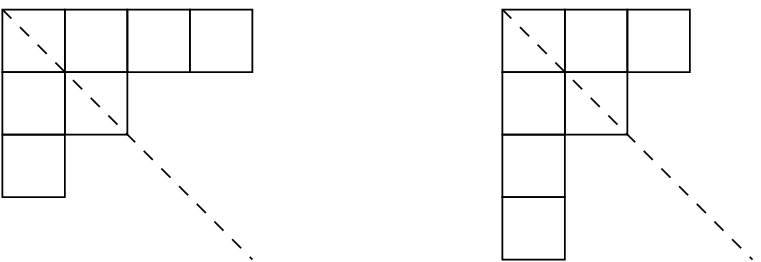}
\caption{\label{conjugate} To obtain the conjugate $\hat{\lambda}$ of Young
  diagram $\lambda$, reflect $\lambda$ about its diagonal. In other
  words the number of boxes in the $i\th$ column of $\hat{\lambda}$ is
  equal to the number of boxes in the $i\th$ row of $\lambda$.}
\end{center}
\end{figure}

The conjugate of Young diagram $\lambda$ is obtained by reflecting
$\lambda$ about the main diagonal, as shown in figure \ref{conjugate}. If
$\lambda$ is not self-conjugate then the representation
$\rho_{\lambda}$ of $S_n$ remains irreducible when restricted to
$A_n$. In this case we can simply use the algorithm of section
\ref{algorithm}. If $\lambda$ is self-conjugate then the
representation $\rho_\lambda$ of $S_n$ becomes reducible when
restricted to $A_n$. It is a direct sum of two irreducible
representations of $A_n$, called $\rho_{\lambda+}$ and
$\rho_{\lambda-}$. The two corresponding invariant subspaces of the
reducible representation are the $+1$ and $-1$ eigenspaces,
respectively, of the ``associator'' operator $S$ defined as follows.

\begin{figure}
\begin{center}
\includegraphics[width=0.12\textwidth]{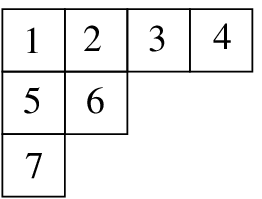}
\caption{\label{typewriter} For a given Young diagram, there is a
  unique Young tableau in ``typewriter'' order, in which the boxes are
  numbered from left to right across the top row then from left to
  right across the next row, and so on, as illustrated in the example
  above.}
\end{center}
\end{figure}

Let $\lambda$ be a self-conjugate Young diagram of $n$ boxes. Let
$\Lambda_0$ be the ``typewriter-order'' Young tableau obtained by numbering
the boxes from left to right across the first row, then left to right
across the second row, and so on, as illustrated in figure
\ref{typewriter}. For any standard Young tableau $\Lambda$ of shape
$\lambda$, let $w_\Lambda \in S_n$ be the permutation that brings the
boxes into typewriter order. That is, $w_{\Lambda} \Lambda =
\Lambda_0$. Let $\hat{\Lambda}$ be the conjugate of $\Lambda$, obtained by
reflecting $\Lambda$ about the main diagonal. If $\Lambda$ is standard
then so is $\hat{\Lambda}$. Let $d(\lambda)$ be the length of the main
diagonal of $\lambda$. $S$ is the linear operator on
$\mathcal{V}_\lambda$ defined by
\begin{equation}
\label{S}
S \Lambda = i^{(n-d(\lambda))/2} \mathrm{sign}(w_\Lambda) \hat{\Lambda}.
\end{equation}

An orthonormal basis for each of the eigenspaces of $S$ can be easily
constructed from the Young-Yamanouchi basis. When $(n-d(\lambda))/2$
is odd, every standard Young tableau $\Lambda$ of shape $\lambda$ has
the property $\mathrm{sign}(w_\Lambda) =
-\mathrm{sign}(w_{\hat{\Lambda}})$, and $S$ is a direct sum of $2
\times 2$ blocks of the form
\[
\left[ \begin{array}{cc} 0 & -i \\
                         i & 0 
\end{array} \right]
\]
interchanging $\Lambda$ and $\hat{\Lambda}$. In this case, the linear
combinations 
$
\frac{1}{\sqrt{2}} (\Lambda + i \hat{\Lambda})
$
for each conjugate pair of standard Young tableaux form an orthonormal
basis for the $+1$ eigenspace of $S$, and the linear combinations
$
\frac{1}{\sqrt{2}}(\Lambda - i \hat{\Lambda})
$
form an orthonormal basis for the $-1$ eigenspace of $S$. Similarly,
when $(n-d(\lambda))/2$ is even, $\mathrm{sign}(w_\Lambda) =
\mathrm{sign}(w_{\hat{\Lambda}})$ for all standard Young tableaux
$\Lambda$ of shape $\lambda$. Thus $S$ is a direct sum of $2 \times 2$
blocks of the form
\[
\left[ \begin{array}{cc} 0 & -1 \\
                        -1 & 0
\end{array} \right]
\]
interchanging $\Lambda$ and $\hat{\Lambda}$. In this case the linear
combinations 
$
\frac{1}{\sqrt{2}} (\Lambda - \hat{\Lambda})
$
form an orthonormal basis for the $+1$ eigenspace of $S$ and the
linear combinations
$
\frac{1}{\sqrt{2}} (\Lambda + \hat{\Lambda})
$
form an orthonormal basis for the $-1$ eigenspace of $S$.

Suppose $\lambda$ is self-conjugate and $(n-d(\lambda))/2$ is
even. Any matrix element of the irreducible representation
$\rho_{\lambda+}$ of $A_n$ is given by
\[
\frac{1}{2} (\Lambda + \hat{\Lambda}) \rho_\lambda(\pi) (\Gamma + \hat{\Gamma}),
\]
where $\Lambda, \Gamma$ is some pair of standard Young tableaux and
$\pi$ is some element of $A_n$. This is a linear combination of only
four Young-Yamanouchi matrix elements of $\rho_\lambda(\pi)$. One
can use the algorithm of section \ref{algorithm} to calculate each of
these and then simply add them up with the appropriate coefficients.
The cases where $(n-d(\lambda))/2$ is odd and/or we want a matrix
element of $\rho_{\lambda-}$ are analogous.

\section{Acknowledgements}

I thank Greg Kuperberg and anonymous referees for suggesting the
approaches described in sections \ref{fourier} and \ref{Schurxform}. I
thank Daniel Rockmore, Cris Moore, Andrew Childs, Aram Harrow, John
Preskill, and Jeffrey Goldstone for useful discussions. I thank Isaac
Chuang and Vincent Crespi for comments that helped to inspire this
work, and anonymous referees for useful comments. Parts of this work
were completed the Center for Theoretical physics at MIT, the Digital
Materials Laboratory at RIKEN, and the Institute for Quantum
Information at Caltech. I thank these institutions as well as the Army
Research Office (ARO), the Disruptive Technology Office (DTO), the
Department of Energy (DOE), and Franco Nori and Sahel Ashab at RIKEN.

\appendix

\section{$\|a_{\vec{m}}(H_p)\|$ is independent of $p$}
\label{normindep}

As shown in section \ref{ualg}, the irreducible representation of
an arbitrary $u \in U(n)$ with weight $\vec{m}$ can be computed by
simulating the time evolution according to a series of Hamiltonians of
the form $A_{\vec{m}}(H_p)$, where $A_{\vec{m}}$ is the
Gel'fand-Tsetlin representation of the Lie algebra $su(n)$ and
\[
H_p = 0_p \oplus h \oplus 0_{n-p-2},
\]
where $h$ is a $2 \times 2$ antihermitian matrix. The quantum
algorithm for simulating these Hamiltonians require that
$\|A_{\vec{m}}(H_p)\|$ be at most $\mathrm{poly}(n)$. In section
\ref{ualg} we showed this to be the case for $p=0$. Here we prove
it for all $p$ by showing:

\begin{proposition}
Let $h$ be a fixed $2 \times 2$ antihermitian matrix and let $H_p =
0_p \oplus h \oplus 0_{n-p-2}$. Let $a_{\vec{m}}$ be the
Gel'fand-Tsetlin representation of $su(n)$ with weight $\vec{m}$. Then
$\|a_{\vec{m}}(H_p)\|$ is independent of $p$.
\end{proposition}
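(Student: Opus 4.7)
The plan is to exhibit $H_p$ and $H_0$ as conjugate inside $u(n)$ by an element of $U(n)$ itself, and then push this conjugacy through the representation $A_{\vec{m}}$. Concretely, let $\sigma \in U(n)$ be the permutation matrix that exchanges basis vector $1$ with basis vector $p+1$ and basis vector $2$ with basis vector $p+2$ (leaving all other basis vectors fixed). Since conjugation by a permutation matrix relocates the $(i,j)$ entry of any matrix to position $(\sigma(i),\sigma(j))$, a direct check gives $\sigma H_0 \sigma^{-1} = H_p$. Note $\sigma$ is a genuine element of $U(n)$, so this conjugacy is happening at the group level.

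The next step is to invoke the standard compatibility between a Lie group representation and its differential. For any $g \in U(n)$ and $X \in u(n)$, differentiating the identity $A_{\vec{m}}(g e^{tX} g^{-1}) = A_{\vec{m}}(g)\, A_{\vec{m}}(e^{tX})\, A_{\vec{m}}(g)^{-1}$ at $t=0$ yields
\[
a_{\vec{m}}(g X g^{-1}) = A_{\vec{m}}(g)\, a_{\vec{m}}(X)\, A_{\vec{m}}(g)^{-1}.
\]
Applying this with $g = \sigma$ and $X = H_0$ gives $a_{\vec{m}}(H_p) = A_{\vec{m}}(\sigma)\, a_{\vec{m}}(H_0)\, A_{\vec{m}}(\sigma)^{-1}$. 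As noted in section \ref{lieintro}, $A_{\vec{m}}$ is a unitary representation of $U(n)$, so $A_{\vec{m}}(\sigma)$ is a unitary matrix on the representation space, and conjugation by a unitary preserves the spectral norm. Hence $\|a_{\vec{m}}(H_p)\| = \|a_{\vec{m}}(H_0)\|$, which is the desired $p$-independence.

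The only place one needs to be careful is the appeal to the compatibility identity: it presupposes that the Gel'fand-Tsetlin formulas (\ref{r1})–(\ref{r6}) really do define the differential of the group representation $A_{\vec{m}}$, rather than some reshuffled object. This is classical (see \cite{Vilenkin}), but if one preferred a self-contained route, the same result can be extracted from the formulas directly: the representation of the subalgebra $\mathrm{span}\{E_{p+1,p+1},E_{p+2,p+2},E_{p+1,p+2},E_{p+2,p+1}\}$ is unitarily equivalent, via an explicit permutation-of-Gel'fand-patterns unitary arising from $A_{\vec{m}}(\sigma)$, to the representation of $\mathrm{span}\{E_{11},E_{22},E_{12},E_{21}\}$. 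This equivalence is the only nontrivial step; the rest of the argument is bookkeeping.
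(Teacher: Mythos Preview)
Your proof is correct and follows essentially the same approach as the paper's: both exploit that $H_p$ and $H_0$ are conjugate inside $U(n)$ by a permutation matrix, then push this conjugacy through the unitary representation $A_{\vec{m}}$ to conclude that $a_{\vec{m}}(H_p)$ and $a_{\vec{m}}(H_0)$ are unitarily conjugate and hence have the same norm. The only cosmetic difference is that the paper carries out the differentiation $\frac{\ud}{\ud k} A_{\vec{m}}(U_p^k)$ explicitly rather than invoking the compatibility identity $a_{\vec{m}}(gXg^{-1}) = A_{\vec{m}}(g)\,a_{\vec{m}}(X)\,A_{\vec{m}}(g)^{-1}$ by name, but the content is identical; one minor caveat is that your specific choice of $\sigma$ (swapping $1\leftrightarrow p{+}1$ and $2\leftrightarrow p{+}2$) is only literally a well-defined permutation when $p\geq 2$, so for $p=1$ you should instead take, e.g., the $3$-cycle sending $(1,2,3)\mapsto(2,3,1)$---this does not affect the argument.
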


\begin{proof}

Let $U_p^k = e^{k H_p}$. Then
\[
U_p^k = \id_p \oplus e^{k h} \oplus \id_{n-p-2}.
\]
Thus for any $0 \leq q \leq n$, there exists $V \in U(n)$ such that
\begin{equation}
\label{permutebasis}
U_q^k = V U_p^k V^{-1}.
\end{equation}
Specifically, $V$ is just a permutation matrix. Let $A_{\vec{m}}$ be
the Gel'fand-Tsetlin representation of $SU(n)$. That is,
\[
A_{\vec{m}}(U_q^k) = e^{a_{\vec{m}}(k H_q)}.
\]
Thus
\begin{eqnarray}
\left\| \frac{\ud}{\ud k} A_{\vec{m}}(U_p^k) \right\| & = & 
\| a_{\vec{m}}(H_p) e^{k a_{\vec{m}}(H_p)} \| \nonumber \\
\label{normderiv}
& = & \|a_{\vec{m}}(H_p)\|.
\end{eqnarray}
Here we have used the fact that $A_{\vec{m}}$ is a unitary
representation. Similarly,
\[
\|a_{\vec{m}}(H_q)\| = \left\| \frac{\ud}{\ud k} A_{\vec{m}}(U_q^k) \right\|.
\]
Using equation \ref{permutebasis}, this is equal to
\[
\left\| \frac{\ud}{\ud k} A_{\vec{m}}(V U_p^k V^{-1}) \right\|.
\]
Because $A_{\vec{m}}$ is a group homomorphism and $V$ is
independent of $k$ this is equal to
\[
\left\| A_{\vec{m}}(V) \left( \frac{\ud}{\ud k} A_{\vec{m}}(U_p^k) \right)
  A_{\vec{m}}(V)^{-1} \right\|.
\]
Because $A_{\vec{m}}$ is a unitary representation this is equal to
\[
\left\| \frac{\ud}{\ud k} A_{\vec{m}}(U_p^k) \right\|.
\]
By equation \ref{normderiv} this is equal to 
$\| a_{\vec{m}}(H_p) \|$. \end{proof}

\bibliography{irreps}

\end{document}